
\documentclass[letterpaper, 10 pt, conference]{ieeeconf}  

\IEEEoverridecommandlockouts                              
\overrideIEEEmargins

\usepackage{cite}
\usepackage{times}
\usepackage{mathptmx}       
\usepackage{mathtools}
\usepackage{rotating}
\usepackage{amsmath}
\usepackage{amsfonts,amssymb}
\usepackage{fancybox,color}
\usepackage{algorithm}
\usepackage{algorithmic}
\usepackage{url}
\usepackage{subfigure}


\newtheorem{assumption}{Assumption}
\newtheorem{theorem}{Theorem}
\newtheorem{proposition}{Proposition}

\title{\LARGE \bf
Robotic Manifold Tracking of Coherent Structures in Flows
}


\author{M. Ani Hsieh, Eric Forgoston, T. William Mather, and Ira B. Schwartz
\thanks{This work was supported by the Office of Naval Research (ONR). MAH was an ONR Summer Faculty Fellow and supported by ONR Contract No. N0001411WX20079.  EF is supported by the Naval Research Laboratory (Award No.
N0017310-2-C007)  TWM is supported by the National Science Foundation under Grant No. DGE-0947936.}
\thanks{M. Ani Hsieh and T. William Mather are with the SAS Laboratory, Mechanical Engineering \& Mechanics Department,
        Drexel University, Philadelphia, PA 19104, USA
        {\tt\small \{mhsieh1,twm32\}@drexel.edu}}%
\thanks{Eric Forgoston is with the Department of Mathematical Sciences, Montclair State
        University, Montclair, New Jersey 07043, USA
        {\tt\small eric.forgoston@montclair.edu}}%
\thanks{Ira B. Schwartz is with the Nonlinear Systems Dynamics Section, Plasma Physics Division, Code 6792, U.S. Naval Research Laboratory, Washington, DC 20375, USA
        {\tt\small ira.schwartz@nrl.navy.mil}}%
}

\begin{document}

\maketitle
\thispagestyle{empty}
\pagestyle{empty}

\begin{abstract}
Tracking Lagrangian coherent structures in dynamical systems is important for many applications such as oceanography and weather prediction. In this paper, we present a collaborative robotic control strategy designed to track stable and unstable manifolds. The technique does not require global information about the fluid dynamics, and is  based on local sensing, prediction, and correction.  The collaborative control strategy is implemented on a team of three robots to track coherent structures and manifolds on static flows as well as a noisy time-dependent model of a wind-driven double-gyre often seen in the ocean.  We present simulation and experimental results and discuss theoretical guarantees of the collaborative tracking strategy.
\end{abstract}

\section{INTRODUCTION}\label{sec:intro}

In this paper, we present a collaborative control strategy for a class of autonomous underwater vehicles (AUVs) to track the coherent structures and manifolds on flows.  In realistic ocean flows, these time-dependent coherent structures, or Lagrangian coherent structures (LCS), are similar to separatrices that divide the flow into dynamically distinct regions.  LCS are extensions of stable and unstable manifolds to general time-dependent flows \cite{ref:Haller2000}, and they carry a great deal of global information about the dynamics of the flows. For two-dimensional (2D) flows, LCS are analogous to ridges defined by local maximum instability, and quantified by local measures of Finite-Time Lyapunov Exponents (FTLE) \cite{ref:Shadden2005}.

Recently, LCS have been shown to coincide with optimal trajectories in the ocean which minimize the energy and the time needed to traverse from one point to another \cite{ref:Inanc2005, ref:Senatore2008}.  Furthermore, to improve weather and climate forecasting, and to better understand various physical, chemical, and geophysical processes in the ocean, there has been significant interest in the deployment of autonomous sensors to measure a variety of quantities of interest.  One drawback to operating sensors in time-dependent and stochastic environments like the ocean is that the sensors will tend to escape from their monitoring region of interest.  Since the LCS are inherently unstable and denote regions of the flow where more escape events may occur \cite{fbys11}, knowledge of the LCS are of paramount importance in maintaining a sensor in a particular monitoring region.

Existing work in cooperative boundary tracking for robotic teams that relies on one-dimensional (1D) parameterizations include~\cite{Hsieh2005, ref:Susca2008} and~\cite{Triandaf2005, ref:Goncalves2010} for static and time-dependent cases respectively. Formation control strategies for
distributed estimation of level surfaces and scalar fields in the ocean are presented in~\cite{ref:Zhang2007, lynch2008, ref:Wu2011} and pattern formation for surveillance and monitoring by robot teams is discussed in~\cite{ref:Spletzer2005, ref:Kalantar06, ref:Hsieh2007}.  Our work is distinguished from existing work in that we use cooperative robots to find coherent structures without requiring a global picture of the ocean dynamics.  We take inspiration from \cite{ref:Nusse1989} and design a strategy to enable a team of robots to track the stable/unstable manifolds of general 2D conservative flows through local sensing alone.  We verify the feasibility of our method through simulations and experiments and show how the proposed strategy can be extended to track coherent structures in time-dependent conservative flows with measurement noise.  To our knowledge, this is the first attempt in the development of tracking strategies for mapping LCS in the ocean using AUVs.

The novelty of this work lies in the use of nonlinear dynamical and chaotic system analysis techniques to derive a tracking strategy for a team of robots. The cooperative control strategy leverages the spatio-temporal sensing capabilities of a team of networked robots to track the boundaries separating the regions in phase space that support distinct dynamical behavior.  Additionally, our boundary tracking relies solely on local measurements of the velocity field.  Our technique is quite general, and may be applied to any conservative flow.

The paper is structured as follows: We formulate the problem and outline key assumptions in Section \ref{sec:probForm}.  The cooperative control strategy is presented in Section \ref{sec:method} and its theoretical properties analyzed in Section \ref{sec:analysis}.  Section \ref{sec:results} presents our simulation and experimental results.  The extension of the proposed strategy to a noisy time-dependent model of a wind-driven double-gyre is presented in \ref{sec:doubleGyre}.  We conclude with a discussion of our results and directions for future work in Sections \ref{sec:discuss} and \ref{sec:future} respectively.

\section{PROBLEM FORMULATION}\label{sec:probForm}
We consider the problem of controlling a team of $N$ planar AUVs to collaboratively track the material lines that separate regions of flow with distinct fluid dynamics.  This is similar to the problem of tracking the stable (and unstable) manifolds of a general nonlinear dynamical system where the manifolds separate regions in phase space with distinct dynamical behaviors.  We assume the following 2D kinematic model for each of the AUVs:
\begin{subequations} \label{eq:kinematics}
\begin{equation}
\dot{x}_i = V_i \cos\theta_i + u_{i},
\end{equation}
\begin{equation}
\dot{y}_i = V_i \sin\theta_i + v_{i},
\end{equation}
\end{subequations}
where $\mathbf{x}_i = [x_i, \, y_i]^T$ is the vehicle's planar position, $V_i$ and $\theta_i$ are the vehicle's linear speed and heading, and $\mathbf{u}_i = [u_i, \, v_i]^T$ is the velocity of the fluid current experienced/measured by the $i^{th}$ vehicle.  Additionally, we assume each agent can be circumscribed by a circle of radius $r$, {\it i.e.}, each vehicle can be equivalently described as a disk of radius $r$.

In this work, $\mathbf{u}_i$ is provided by a 2D planar conservative vector field described by a differential equation of the form
\begin{equation}\label{eq:flowMap}
\mathbf{\dot{x}} = F(\mathbf{x}).
\end{equation}
In essence, $u_i = F_x(\mathbf{x}_i)$ and $v_i = F_y(\mathbf{x}_i)$.  Let $B_S$ and $B_U$ denote the stable and unstable manifolds of \eqref{eq:flowMap}.  In general, $B_S$ and $B_U$ are the separating boundaries between regions in phase space with distinct dynamics.  For 2D flows, $B_*$ are simply one-dimensional curves where $*$ denotes either stable ($S$) or unstable ($U$) boundaries.  For a small region centered about a point on $B_*$, the system is unstable in one dimension.  Finally, let $\rho(B_*)$ denote the radius of curvature of $B_*$ and assume that the minimum of the   radius of curvature $\rho_{min}(B_*) > r$.  This last assumption is needed to ensure the robots do not lose track of the $B_*$ due to sharp turns.

The objective is to develop a collaborative strategy to enable a team of robots to track $B_*$ in general 2D planar conservative flow fields through local sampling of the velocity field.  While the focus is on the development of a tracking strategy for $B_S$, the proposed method can be easily extended to track $B_U$ since $B_U$ are simply stable manifolds of \eqref{eq:flowMap} for $t < 0$. We present our methodology in the following section.

\section{METHODOLOGY}\label{sec:method}
Our methodology is inspired by the Proper Interior Maximum (PIM) Triple Procedure \cite{ref:Nusse1989} -- a numerical technique designed to find stationary trajectories in chaotic regions with no attractors.  While the original procedure was developed for chaotic dynamical systems, the approach can be employed to reveal the stable set of a saddle point of a general nonlinear dynamical system.  The procedure consists of iteratively finding an appropriate PIM Triple on a saddle straddling line segment and propagating the triple forward in time.  We briefly summarize the procedure in the following section and refer the interested reader to \cite{ref:Nusse1989} for further details.

\subsection{The PIM Triple Procedure}
Given the dynamical system described by \eqref{eq:flowMap}, let ${\cal D} \in \mathbb{R}^2$ be a closed and bounded set such that ${\cal D}$ does not contain any attractors of \eqref{eq:flowMap}.  Given a point $\mathbf{x} \in {\cal D}$, the escape time of $\mathbf{x}$, denoted by $T_E(\mathbf{x})$, is the time $\mathbf{x}$ takes to leave the region ${\cal D}$ under the differential map \eqref{eq:flowMap}.

Let $J$ be a line segment that crosses the stable set $B_S$ in ${\cal D}$, {\it i.e.}, the endpoints of the $J$ are on opposite sides of $B_S$.  Let $\{\mathbf{x}_L, \mathbf{x}_C, \mathbf{x}_R\}$ denote a set of three points in $J$ such that $\mathbf{x}_C$ denotes the interior point.  Then $\{\mathbf{x}_L, \mathbf{x}_C, \mathbf{x}_R\}$ is an {\it Interior Maximum} triple if $T_E(\mathbf{x}_C) > \max\{T_E(\mathbf{x}_L), T_E(\mathbf{x}_R)\}$. Furthermore, $\{\mathbf{x}_L, \mathbf{x}_C, \mathbf{x}_R\}$ is a {\it Proper Interior Maximum} (PIM) triple if it is an Interior Maximum triple and the interval $[\mathbf{x}_L, \mathbf{x}_R]$ in $J$ is a proper subset of $J$.

Then the numerical computation of any PIM triple can be obtained iteratively
starting with an initial saddle straddle line segment $J_0$. Let
$\mathbf{x}_{L_0}$ and $\mathbf{x}_{R_0}$ denote the endpoints of $J_0$ and
apply an $\epsilon_0 > 0$ discretization of $J_0$ such that $\mathbf{x}_{L_0}
= \mathbf{q}_0 < \mathbf{q}_1 < \ldots < \mathbf{q}_M = \mathbf{x}_{R_0}$.
For every point $\mathbf{q}_i$, determine $T_E(\mathbf{q}_i)$ by propagating
$\mathbf{q}_i$ forward in time using \eqref{eq:flowMap}. Then the PIM triple
in $J_0$ is given by the the points $\{\mathbf{q}_{k-1}, \mathbf{q}_k,
\mathbf{q}_{k+1}\}$ where $\mathbf{q}_k = \arg\max\limits_{i=1, \ldots, M}
T_E(\mathbf{q}_i)$. This PIM triple can then be further refined by choosing $J_1$ to be the line segment containing $\{\mathbf{q}_{k-1}, \mathbf{q}_k, \mathbf{q}_{k+1}\}$ and reapplying the procedure with another $\epsilon_1 > 0$ discretization where $\epsilon_1 < \epsilon_0$.

Given an initial saddle straddling line segment $J_0$, it has been shown that
the line segment given by any subsequent PIM triple on $J_0$ is also a saddle
straddling line segment \cite{ref:Nusse1989}.  Furthermore, if we use a PIM
triple $\mathbf{x}(t) = \{\mathbf{x}_L, \mathbf{x}_C, \mathbf{x}_R\}$ as the
initial conditions for the dynamical system given by \eqref{eq:flowMap} and
propagate the system forward in time by $\Delta t$, then the line segment
containing the set $\mathbf{x}(t+\Delta t)$, $J_{t+\Delta t}$, remains a
saddle straddle line segment. As such, the same numerical procedure can be
employed to determine an appropriate PIM trip on $J_{t+\Delta t}$.  This procedure can be repeated to eventually reveal the entire stable set $B_{S}$ and unstable set $B_{U}$ within ${\cal D}$ if time was propagated forwards and backwards respectively.  Furthermore, since the procedure always begins with a valid saddle straddling line segment, by construction, the procedure always results in a non-empty set.

Inspired by the PIM Triple Procedure, we propose a cooperative {\it saddle straddle} control strategy for a team of $N\geq3$ robots to track the stable (and unstable) manifolds of a general conservative time-independent flow field $F(\mathbf{x})$.  Different from the procedure, our robots will solely rely on information that can be gathered via local sensing and shared through the network.  In contrast, a straight implementation of the PIM Triple Procedure would require global knowledge of the structure of the system dynamics throughout a given region given its reliance on computing escape times. We describe our approach in the following section.

\subsection{Controller Synthesis}
Consider a team of three robots and identify them as robots $\{L, C, R\}$.  While the robots may be equipped with similar sensing and actuation capabilities, we propose a heterogeneous cooperative control strategy.

Let $\mathbf{x}(0) = [\mathbf{x}_{L}^{T}(0), \, \mathbf{x}_{C}^{T}(0), \,
\mathbf{x}_{R}^{T}(0)]^{T}$ be the initial conditions for the three robots.
Assume that $\mathbf{x}(0)$ lies on the line segment $J_0$ where $J_0$ is a
saddle straddle line segment and $\{\mathbf{x}_L(0), \mathbf{x}_C(0),
\mathbf{x}_R(0)\}$ constitutes a PIM triple.  Similar to the PIM Triple
Procedure, the objective is to enable the robots to  maintain a formation such
that a valid saddle straddle line segment can be maintained between robots $L$
and $R$.  Instead of computing the escape
times for points on $J_0$ as proposed by the PIM Triple Procedure, robot $C$ must remain
close to $B_S$ using only local measurements of the velocity field 
provided by the rest of the team.  As such, we refer to robot $C$ as the
tracker of the team while robots $L$ and $R$ maintains a {\it straddle
  formation} across the boundary at all times.  Robots $L$ and $R$ may be thought of herding robots, since they control and determine the actions of the tracking robot.

\subsubsection{Straddling Formation Control}
The controller for the straddling robots consists of two discrete states: a
passive control state, $U_{P}$, and an active control state, $U_{A}$.  The
robots initialize in the passive state $U_{P}$ where the objective is to
follow the flow of the ambient vector field. Therefore, $V_i = 0$ for $i = L, R$.  Robots execute $U_{P}$ until they reach the maximum allowable separation distance $d_{Max}$ from robot $C$.  When $\Vert \mathbf{x}_i - \mathbf{x}_C\Vert > d_{Max}$, robot $i$ switches to the active control state, $U_{A}$, where the objective is to navigate to a point $\mathbf{p}_i$ on the current {\it projected} saddle straddle line segment $\hat{J}_t$ such that $\Vert \mathbf{p}_i - \mathbf{p}_c\Vert = d_{Min}$ and  $\mathbf{p}_C$ denotes the midpoint of $\hat{J}_t$.  When robots execute $U_{A}$ , $V_i = \Vert (\mathbf{p}_i - \mathbf{x}_i) - \mathbf{u}_i\Vert$ and $\theta_i(t) = \alpha_i(t)$ where $\alpha_i$ is the angle between the desired, $(\mathbf{p}_i - \mathbf{x}_i)$, and current heading, $\mathbf{u}_i$, of robot $i$ as shown in Fig. \ref{fig:schematic}.  In summary, the straddling control strategy for robots $L$ and $R$ is given by
\begin{subequations} \label{eq:straddleControl}
\begin{align}
V_i & = \left\{
    \begin{array}{ll}
        0 & \textrm{if } d_{Min} < \Vert \mathbf{x}_i - \mathbf{x}_C\Vert < d_{Max} \\
    \Vert (\mathbf{p}_i - \mathbf{x}_i) - \mathbf{u}_i\Vert & \textrm{otherwise}
    \end{array}\right.,\\
\theta_i & = \left\{
    \begin{array}{ll}
        0 & \textrm{if } d_{Min} < \Vert \mathbf{x}_i - \mathbf{x}_C\Vert < d_{Max} \\
        \alpha_i & \textrm{otherwise}
    \end{array}\right.~.
\end{align}
\end{subequations}

\begin{figure}
\centering
\includegraphics[width=0.7\linewidth]{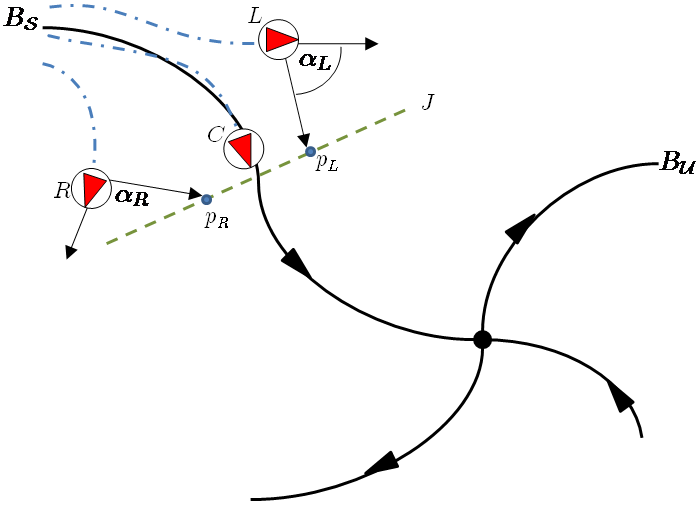}
\caption{Three robots tracking $B_{S}$ in a conservative vector field.  The blue dash-dot lines represent the robot trajectories, the green dashed line represents the saddle straddle line segment $J$, and $\mathbf{p}_L$ and $\mathbf{p}_R$ denotes the target positions for $L$ and $R$ respectively when executing $U_{P}$ and $U_{A}$.\label{fig:schematic}}
\end{figure}

We note that while the primary control objective for robots $L$ and $R$ is to maintain a straddle formation across $B_S$, robots $L$ and $R$ are also constantly sampling the velocity of the local vector field and communicating these measurements and their relative positions to robot $C$.  Robot $C$ is then tasked to use these measurements to track the position of $B_{S}$.


\subsubsection{Manifold Tracking Control}
Let $\mathbf{\hat{u}}_{L}(t)$, $\mathbf{\hat{u}}_{C}(t)$, and
$\mathbf{\hat{u}}_{C}(t)$ denote the current velocity measurements obtained by
robots $L$, $C$, and $R$ at their respective positions. Let $d(\cdot, \cdot)$
denote the Euclidean distance function and assume that $d(\mathbf{x}_C, B_S) <
\epsilon$ such that $\epsilon > 0$ is small.  Given the straddle line
segment $J_t$ such that $\mathbf{x}_L(k)$ and $\mathbf{x}_R(k)$ are the
endpoints of $J_t$, we consider an $\epsilon_t < \epsilon$ discretization of
$J_t$ such that  $\mathbf{x}_L = \mathbf{q}_1 < \mathbf{q}_2 < \ldots <
\mathbf{q}_M = \mathbf{x}_R$.  The objective is to use the velocity
measurements provided by the team to interpolate the vector field at the
points $\mathbf{q}_1, \ldots, \mathbf{q}_M$.  Since \eqref{eq:flowMap} has
${\cal C}^1$ continuity and if $\mathbf{x}_C$ is $\epsilon$-close to $B_S$,
then the point $\mathbf{q}_B = \arg\max\limits_{k=1, \ldots, M}
\mathbf{u}(\mathbf{q}_k)^T \mathbf{\hat{u}}_C(t)$ should be $\delta$-close to
$B_S$ where $\epsilon < \delta < A$ and $A$ is a small enough positive constant.

While there are numerous vector field interpolation techniques available \cite{ref:Agui1987, ref:Marchioli2007, ref:Fuselier2009}, we employ the {\it inverse distance weighting method} described in \cite{ref:Agui1987}.  For a given set of velocity measurements $\mathbf{\hat{u}}_i(t)$ and corresponding position estimates $\mathbf{\hat{x}}_i(t)$, the velocity vector at some point $\mathbf{q}_k$ is given by
\begin{eqnarray*}
\mathbf{u}(\mathbf{q}_k) = \sum_{j} \sum_{i=1}^{N} \frac{w_{ij}\mathbf{\hat{u}}_i(j)}{\sum_{j} \sum_{i=1}^{N} w_{ij}}
\end{eqnarray*}
where $w_{ij} = \Vert \mathbf{\hat{x}}_i(j) - \mathbf{q}_i\Vert^{-2}$.  Rather
than rely solely on the current measurements provided by the three robots, it
is possible to include the recent history of $\mathbf{\hat{u}}_i(t)$ to
improve the estimate of $\mathbf{u}(\mathbf{q}_k)$, {\it i.e.},
$\mathbf{\hat{u}}_i(t-\Delta T)$, $\mathbf{\hat{u}}_i(t-2\Delta T)$, and so
on, where $\Delta T$ is the sampling period and $i=\{L,C,R\}$.  Thus, the control strategy for the tracking robot $C$ is given by
\begin{subequations}\label{eq:trackControl}
\begin{align}
V_C & = \Vert [(\mathbf{q}_B + b \mathbf{\hat{u}}_B) - \mathbf{x}_C] - \mathbf{u}_C\Vert\\
\theta_C & = \beta_C
\end{align}
\end{subequations}
where $\beta_C$ denotes the difference in the heading of robot $C$ and the
vector $(\mathbf{q}_B - \mathbf{\hat{u}}_B)$ and $b>r$ is a small number.  The
term $b \mathbf{\hat{u}}_B$ is included to ensure that the control strategy
aims for a point in front of robot $C$ rather than behind it.  As such, the {\it projected} saddle straddle line segment $\hat{J}_t$
at each time step is given by $\mathbf{p}_c = q_C + b\mathbf{u}_C$ with 
$\hat{J}_t$ orthogonal to $B_S$ at $q_C$ and $\Vert \hat{J}_t \Vert$ chosen to be in the interval $[2d_{Min}, 2d_{Max}]$.

\section{ANALYSIS}\label{sec:analysis}
In this section, we discuss the theoretical feasibility of the proposed saddle straddle control strategy.  We begin with the following key assumption on the robots' initial positions.

\begin{assumption} Given a team of three robots $\{L, C, R\}$, assume that $d(\mathbf{x}_C(0), B_S) < \epsilon$ for a small value of $\epsilon > 0$, $\Vert \mathbf{x}_L - \mathbf{x}_C \Vert = \Vert \mathbf{x}_R - \mathbf{x}\Vert = d_{Min}$ with $d_{Min} > 2r$, and robots $L$ and $R$ are on opposite sides of $B_S$.
\end{assumption}

In other words, assume that the robots initialize in a valid PIM triple formation and their positions form a saddle straddle line segment orthogonal to $B_S$.  Our main result concerns the validity of the saddle straddle control strategy.

\begin{theorem}\label{thm:1} Given a team of $3$ robots with kinematics given by \eqref{eq:kinematics} and $\mathbf{u}_i$ given by \eqref{eq:flowMap}, the feedback control strategy \eqref{eq:straddleControl} and \eqref{eq:trackControl} maintains a valid saddle straddle line segment in the time interval $[t, t+\Delta t]$ if the initial positions of the robots, $\mathbf{x}(t)$, is a valid PIM triple.
\end{theorem}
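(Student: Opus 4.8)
The plan is to prove the statement by decomposing the interval $[t, t+\Delta t]$ according to the discrete control mode of the straddling robots $L$ and $R$, and to show that the defining ingredients of a valid PIM triple are preserved in each mode: (i) the endpoints $\mathbf{x}_L, \mathbf{x}_R$ remain on opposite sides of $B_S$, so the segment joining them stays saddle straddling, and (ii) the interior point $\mathbf{x}_C$ remains the interior maximum of the escape time. I would first treat the passive state $U_P$. When $d_{Min} < \Vert \mathbf{x}_i - \mathbf{x}_C\Vert < d_{Max}$, the law \eqref{eq:straddleControl} sets $V_i = 0$, so \eqref{eq:kinematics} collapses to $\dot{\mathbf{x}}_i = \mathbf{u}_i = F(\mathbf{x}_i)$; each robot is then an integral curve of \eqref{eq:flowMap}, and the three positions evolve exactly as the PIM triple propagated forward under the flow. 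The Nusse--Yorke result quoted in Section~\ref{sec:method} — that the line segment containing a forward-propagated PIM triple remains a saddle straddle segment — therefore applies verbatim while the team stays in $U_P$.

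To make the straddle property self-contained and mode-independent, I would argue directly from invariance. Since $F \in {\cal C}^1$, solutions of \eqref{eq:flowMap} are unique and $B_S$ is an invariant manifold, so a trajectory starting off $B_S$ can never reach, hence never cross, $B_S$ in finite time. Combined with Assumption~1, which places $\mathbf{x}_L(t)$ and $\mathbf{x}_R(t)$ on opposite sides of $B_S$, this shows the two straddling robots remain on opposite sides throughout $[t, t+\Delta t]$, so the segment between them continues to straddle $B_S$. I would then handle the active state $U_A$, triggered when $\Vert \mathbf{x}_i - \mathbf{x}_C\Vert > d_{Max}$, in which \eqref{eq:straddleControl} drives robot $i$ to a target $\mathbf{p}_i$ on the projected segment $\hat{J}_t$ with $\Vert \mathbf{p}_i - \mathbf{p}_C\Vert = d_{Min}$. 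By construction $\hat{J}_t$ is orthogonal to $B_S$ at $\mathbf{q}_C$, centered at its midpoint $\mathbf{p}_C$, and of length in $[2d_{Min}, 2d_{Max}]$; hence $\mathbf{p}_L$ and $\mathbf{p}_R$ lie on opposite sides of $B_S$ at separation $d_{Min}$. Because each robot is steered back toward the target on its own side, the correction re-establishes a valid straddle without crossing $B_S$, and the properness of $[\mathbf{x}_L, \mathbf{x}_R]$ as a subset of $J$ follows from the bound on $\Vert\hat{J}_t\Vert$.

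Finally I would verify the interior-maximum condition for the tracker $C$. Using the observation from Section~\ref{sec:method} that, under ${\cal C}^1$ continuity and $d(\mathbf{x}_C, B_S) < \epsilon$, the maximizer $\mathbf{q}_B$ of $\mathbf{u}(\mathbf{q}_k)^T\mathbf{\hat{u}}_C$ is $\delta$-close to $B_S$, the law \eqref{eq:trackControl} steers $C$ toward $\mathbf{q}_B + b\mathbf{\hat{u}}_B$, a point within $\delta$ of $B_S$. Thus $d(\mathbf{x}_C(t+\Delta t), B_S) < \delta \ll d_{Min} \le d(\mathbf{x}_{L,R}, B_S)$, and since the escape time increases monotonically as $B_S$ is approached and diverges on $B_S$ itself, we obtain $T_E(\mathbf{x}_C) > \max\{T_E(\mathbf{x}_L), T_E(\mathbf{x}_R)\}$, so the triple remains PIM at time $t+\Delta t$.

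I expect this last step to be the main obstacle: quantifying $\delta$ rigorously, i.e., bounding the inverse-distance-weighting interpolation error and combining it with the Lipschitz constant of $F$ to certify that the drift of $C$ away from $B_S$ over $[t,t+\Delta t]$ stays within the margin to $d_{Min}$, so that $\epsilon < \delta < A$ holds as required. A secondary subtlety is guaranteeing the straddle is not momentarily lost during a switch between $U_P$ and $U_A$ inside the interval, which I would resolve by noting that both modes preserve the side of $B_S$ occupied by each straddling robot.
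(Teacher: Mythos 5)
Your overall plan---decomposing $[t,t+\Delta t]$ by the control mode of the straddling robots and showing the opposite-sides property is preserved in each mode---matches the structure of the paper's proof, which likewise treats $V_i=0$ and $V_i\neq 0$ separately (the paper does so via the separation function $H=\tfrac{1}{2}\Vert \mathbf{x}_L-\mathbf{x}_R\Vert^2$, showing $\frac{d}{dt}H>0$ in the passive mode by linearizing \eqref{eq:flowMap} about the intersection $\mathbf{p}_B$ of $J_t$ with $B_S$, where the unstable direction lies along $J_t$). Your passive-mode argument is in fact cleaner than the paper's: with $V_i=0$ each robot is an integral curve of \eqref{eq:flowMap}, and uniqueness of solutions plus invariance of $B_S$ forbids any finite-time crossing---an exact, global argument where the paper's linearization is only local.

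The genuine gap is in the active mode $U_A$. Once $V_i\neq 0$, the robots' paths are no longer flow trajectories, so invariance gives no protection, and your sentence ``because each robot is steered back toward the target on its own side, the correction re-establishes a valid straddle without crossing $B_S$'' asserts exactly what must be proved. Two things can fail: the target $\mathbf{p}_i$ assigned to robot $i$ could lie on the opposite side of $B_S$ from $\mathbf{x}_i$, and even when it lies on the same side, the path from $\mathbf{x}_i$ to $\mathbf{p}_i$ could cross $B_S$ (possibly twice) if the manifold curves sharply between them. The paper closes this hole with the geometric hypotheses you never invoke: the curvature bound $\rho_{min}(B_S)>r$ together with $d_{Min}>2r$ and the orthogonality of $\hat{J}_t$ to $B_S$ at the tracker's location, which jointly guarantee that $\hat{J}_t$ meets $B_S$ exactly once and that the straddling robots cannot cross the boundary while closing in on robot $C$. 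Without appealing to those assumptions, your $U_A$ case is incomplete. (Your extra verification of the interior-maximum escape-time property is a bonus the paper's proof does not attempt, but the claim that $T_E$ increases monotonically with proximity to $B_S$ would itself require a local hyperbolicity argument of the same kind to be rigorous, so it does not compensate for the missing curvature step.)
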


\begin{proof}
To show this, we must show that at time  $t+\Delta t$, robots $L$ and $R$ remain on opposite sides of $B_S$.  Consider the rate of change of the following function
\begin{align*}
H(\mathbf{x}_L, \mathbf{x}_R) & = \frac{1}{2}(\mathbf{x}_L-\mathbf{x}_R)^{T}(\mathbf{x}_L-\mathbf{x}_R).
\end{align*}
The above expression is simply one half the square of the distance between robots $L$ and $R$.  Let $J_t$ denote the saddle straddle line segment defined by $\mathbf{x}_L(t)$ and $\mathbf{x}_R(t)$ at $t$ and let $\mathbf{p}_B$ be the intersection of $J_t$ and $B_S$.  By construction, if we linearize \eqref{eq:flowMap} about the point $\mathbf{p}_B$, then the Jacobian of \eqref{eq:flowMap} at $\mathbf{p}_B$ will have one positive eigenvalue.  Furthermore, the linearized system can be diagonalized such that the direction of instability lies along $J_t$ \cite{ref:Hartman2002}. Thus, $\frac{d}{dt}H > 0$ in the time interval $[t, t+\Delta t]$ when $V_i=0$ in \eqref{eq:straddleControl}.

When $V_i \neq 0$ in \eqref{eq:straddleControl} for $i = L, R$, $\frac{d}{dt}H < 0$ if the robots $L$ and $R$ are moving closer to robot $C$ after reaching the maximum allowable separation distance.  Recall $\rho_{min}(B_S) > r$, the smallest radius of curvature of $B_S$, and $d_{Min} > 2r$.  Furthermore, robot $C$ initializes $\epsilon$-close to the boundary and \eqref{eq:trackControl} steers $C$ towards $\mathbf{p}_C$ on $\hat{J}_t$ where $\hat{J}_t$ is orthogonal to $B_S$ at $\mathbf{x}_C$.  This ensures that the rate of the change of the radius of curvature of the manifold $B_S$ is small enough such that $\hat{J}_t$ intersects with $B_S$ only once.  Since $d_{Min} > 2r$, this ensures that even if $\frac{d}{dt}H < 0$, the straddling robots never cross the boundary as they move closer to the tracking robot.
\end{proof}

The above theorem guarantees that for any given time interval $[t, t+ \Delta t]$ the team maintains a valid PIM triple formation.  As such, the iterative application of the proposed control strategy leads to the following proposition.

\begin{proposition}
Given a team of $3$ robots with kinematics given by \eqref{eq:kinematics}
and $\mathbf{u}_i$ given by \eqref{eq:flowMap}, the feedback control strategy
\eqref{eq:straddleControl} and \eqref{eq:trackControl} results in an estimate
of $B_S$, denoted as $\hat{B}_S$, such that $\left<B_S, \hat{B}_S
\right>_{L_2} < W$ for some $W > 0$ where $\left<\cdot, \cdot\right>_{L_2}$ denotes the
inner product (which provides an $L_2$ measure between the $B_S$ and
  $\hat{B}_S$ curves).
\end{proposition}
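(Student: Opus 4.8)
The plan is to construct $\hat{B}_S$ as the curve traced by the tracking robot $C$ (equivalently, the concatenation of the interpolated boundary estimates $\mathbf{q}_B$ produced at each sampling instant) and to bound the discrepancy between this curve and the true manifold by integrating a uniformly bounded pointwise error over the finite-length portion of $B_S$ contained in the compact domain ${\cal D}$. The argument is inductive in the sampling index $k$: Theorem \ref{thm:1} supplies the single-step invariance, and the substance of the proof is to show that this invariance, together with the feedback correction in \eqref{eq:trackControl}, keeps the per-step error from accumulating.

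First I would fix the initial data by the Assumption, so that $d(\mathbf{x}_C(0), B_S) < \epsilon$ and the triple straddles $B_S$. Applying Theorem \ref{thm:1} on each interval $[t_k, t_{k+1}]$ with $t_{k+1}-t_k = \Delta t$ gives, by induction, that the PIM triple remains valid for all $k$, so the procedure never terminates and at every step robots $L$ and $R$ lie on opposite sides of $B_S$ with $\hat{J}_{t_k}$ meeting $B_S$ exactly once. This validity is what lets me assert the pointwise bound $d(\mathbf{q}_B, B_S) < \delta$ claimed in the manifold-tracking discussion: the interpolated maximizer $\mathbf{q}_B$ of $\mathbf{u}(\mathbf{q}_k)^T \mathbf{\hat{u}}_C$ is $\delta$-close to $B_S$ because $F$ is ${\cal C}^1$ and $\mathbf{x}_C$ stays $\epsilon$-close. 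I would then absorb the interpolation error of the inverse-distance-weighting estimate into a single constant $\Delta$, bounding it through the ${\cal C}^1$ modulus of continuity of $F$, the discretization scale $\epsilon_t$, and the sampling geometry $d_{Min}, d_{Max}$, so that $d(\mathbf{x}_C(t_k), B_S) \le \Delta$ uniformly in $k$.

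With a uniform pointwise bound in hand, the $L_2$ estimate is routine. Since ${\cal D}$ is closed and bounded and $B_S$ is a ${\cal C}^1$ curve with $\rho_{min}(B_S) > r$, the arc $B_S \cap {\cal D}$ has finite length $\ell$. Parameterizing both $B_S$ and $\hat{B}_S$ by arc length and using that each estimated point lies within $\Delta$ of its footpoint on $B_S$ gives
\begin{equation*}
\| B_S - \hat{B}_S \|_{L_2}^2 \;\le\; \int_0^{\ell} \Delta^2 \, ds \;=\; \Delta^2 \ell ,
\end{equation*}
so the claimed estimate holds with $W = \Delta \sqrt{\ell}$, and the stated inner-product measure of discrepancy is controlled by the same constant.

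The main obstacle is the uniform-in-$k$ control of $\Delta$, that is, ruling out error accumulation over the unboundedly many iterations. Theorem \ref{thm:1} only certifies a single interval, so what remains is to show that the feedback law \eqref{eq:trackControl}, which steers $C$ toward the interpolated point $\mathbf{q}_B + b\mathbf{\hat{u}}_B$ on the orthogonal segment $\hat{J}_t$, acts as a contraction on the transverse error and thus leaves the $\epsilon$-neighborhood of $B_S$ invariant rather than merely bounded on each $[t_k, t_{k+1}]$. Making this precise requires choosing $\Delta t$ small relative to the transverse instability rate (the positive Jacobian eigenvalue at $\mathbf{p}_B$) and invoking the curvature bound $\rho_{min}(B_S) > r$ so that the single intersection of $\hat{J}_t$ with $B_S$ persists across steps; quantifying this closed-loop contraction is the technically delicate part of the argument.
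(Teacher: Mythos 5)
Your proposal follows the same skeleton as the paper's own argument---inductive application of Theorem \ref{thm:1} over consecutive intervals $[t_k, t_{k+1}]$ to guarantee that a valid straddle formation, and hence an estimate of $B_S$, persists for all time---but you then go substantially further than the paper does. The paper's justification stops at existence: it asserts that the recursion yields an estimate $\hat{B}_S$ on any arbitrary time interval, and then explicitly concedes that ``the challenge, however, lies in determining the bound on $W$,'' listing sensor and actuation noise, the interpolation routine, the sampling frequency, and the flow time scales as the quantities $W$ depends on, and deferring that determination to future work. Your arc-length integration step, producing $W = \Delta\sqrt{\ell}$ from a uniform transverse error $\Delta$ and the finite length $\ell$ of $B_S \cap {\cal D}$, is therefore additional structure the paper never supplies; and the obstacle you honestly flag---showing the closed-loop feedback \eqref{eq:trackControl} acts as a contraction on the transverse error so that $\Delta$ is uniform in $k$ and per-step errors do not accumulate---is precisely the open problem the paper acknowledges but does not solve. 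In short, your proposal is at least as complete as the paper's own argument: neither closes the quantitative gap, but you have correctly localized where it lives (the step size $\Delta t$ relative to the unstable eigenvalue at $\mathbf{p}_B$, and persistence of the single intersection of $\hat{J}_t$ with $B_S$ under the curvature bound $\rho_{min}(B_S) > r$), whereas the paper leaves it diffuse. One caution: as stated, the proposition's conclusion ($< W$ for \emph{some} $W>0$) is nearly vacuous once finiteness is established, so the real content of any proof is exactly the uniform bound you attempted; do not mistake the paper's recursion-plus-deferral for a complete argument you failed to match.
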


From Thm. \ref{thm:1}, since the team is able to maintain a valid PIM triple formation across $B_S$ for any given time interval $[t, t+\Delta t]$, this ensures that an estimate of $B_S$ in the given time interval also exists.  Applying this reasoning in a recursive fashion, one can show that an estimate of $B_S$ can be obtained for any arbitrary time interval.  The challenge, however, lies in determining the bound on $W$ such that $\hat{B}_S$ results in a good enough approximation since $W$ depends on the sensor and actuation noise, the vector interpolation routine, the sampling frequency, and the time scales of the flow dynamics.  This is a direction for future work.

\section{RESULTS}\label{sec:results}
\subsection{Simulations}
We illustrate the proposed control strategy given by \eqref{eq:straddleControl} and \eqref{eq:trackControl} with some simulation results. Fig. \ref{fig:sine} shows the trajectories of three robots tracking a sinusoidal boundary while Fig. \ref{fig:star} shows the team tracking a 1D star-shaped boundary.  We note that throughout the entire length of the simulation, the team maintains a saddle straddle formation across the boundary.

In both examples, $\mathbf{u} = -a \nabla \varphi - b \nabla \times \psi$ where $a, b > 0$ and $\varphi$ is an artificial potential function such that $\varphi(\mathbf{x}) = 0$ for all $\mathbf{x} \in B_*$ and $\varphi(\mathbf{x}) < 0$ for any $\mathbf{x} \in \mathbb{R}^2/B_*$.  The vector $\psi$ is a $3 \times 1$ vector whose entries are given by $[0, \, 0, \,\gamma(x,y)]^T$ where $\gamma(x,y)$ is the curve describing the desired boundary \cite{ref:Hsieh2007}.  Lastly, the estimated position of the boundary is given by the position of the tracking robot, {\it i.e.}, robot $C$.  In these examples, we filtered the boundary position using a simple first-order low pass filter.

\begin{figure}
\centering
\subfigure[]{\includegraphics[width=0.45\linewidth]{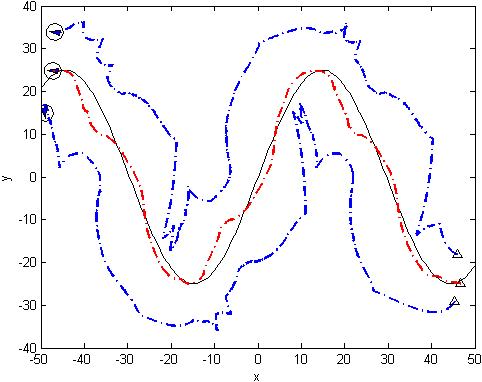}\label{fig:sine}}
\subfigure[]{\includegraphics[width=0.45\linewidth]{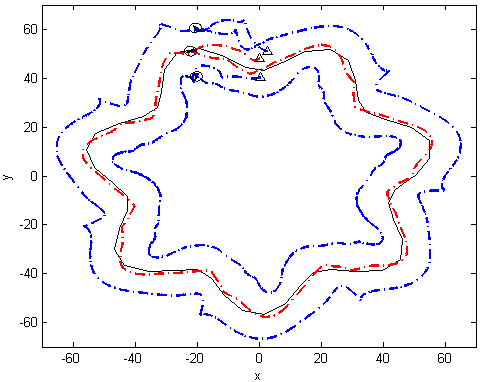}\label{fig:star}}
\caption{Trajectories of $3$ robots tracking a (a) sinusoidal boundary and a (b) star-shaped boundary.  The red dashed line is the estimated position of the desired boundary shown in solid black.  The start positions are shown by $\triangle$ and the end positions are shown by the circle-enclosed blue triangles.}
\end{figure}

\subsection{Experiments}
We also implemented the control strategy on our multi-robot testbed.  The testbed consisted of three mSRV-1 robots in a 4.8x5.4 meter workspace. The mSRV-1 are differential-drive robots equipped with an embedded processor, color camera, and 802.11 wireless capability.  Localization for each robot was provided via a network of overhead cameras.  Fig. \ref{fig:trajectories} shows the trajectories of the robots tracking a star shaped boundary shown in black.  Fig. \ref{fig:expFrame} is a snapshot of the experimental run.  We refer the interested reader to the attached multimedia file for a movie of the full simulation and experimental runs.\label{fig:experiment}

\begin{figure}
\centering
\subfigure[]{\includegraphics[width=0.45\linewidth]{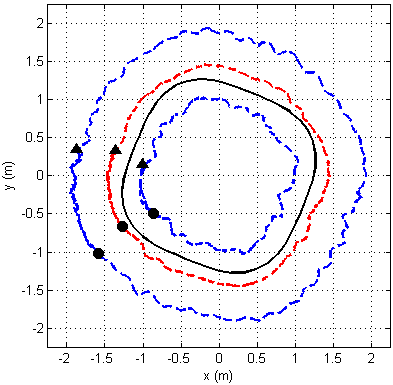}\label{fig:trajectories}}
\subfigure[]{\includegraphics[width=0.45\linewidth]{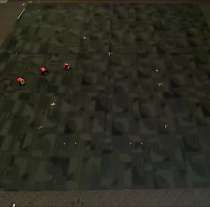}\label{fig:expFrame}}
\caption{Trajectories of the $3$ robot team tracking a (a) star shape.  The red dashed line is the estimated position of the desired boundary shown in solid black.  The start positions are shown by $\triangle$ and the end positions are shown by $\bigcirc$. (b) Snapshot of the multi-robot experiment.\label{fig:expResults}}
\end{figure}

\section{EXTENSION TO PERIODIC BOUNDARIES}\label{sec:doubleGyre}
In this section, we consider the system of $3$ robots with kinematics given by \eqref{eq:kinematics} where $\mathbf{u}_i$ is determined by the wind-driven double-gyre flow model with noise
\begin{subequations}\label{eq:doubleGyre}
\begin{align}
\dot{x} & = -\pi A \sin(\pi \frac{f(x,t)}{s}) \cos(\pi \frac{y}{s}) - \mu x + \eta_1(t), \\
\dot{y} & = \pi A \cos(\pi \frac{f(x,t)}{s}) \sin(\pi \frac{y}{s}) \frac{df}{dx} - \mu y + \eta_2(t), \\
f(x,t) & = \varepsilon \sin(\omega t + \psi) x^2 + (1 - 2 \varepsilon \sin(\omega t+ \psi)) x.
\end{align}
\end{subequations}

When $\varepsilon = 0$, the double-gyre flow is time-independent, while for
$\varepsilon \neq 0$, the gyres undergo a periodic expansion and contraction
in the $x$ direction.  In \eqref{eq:doubleGyre}, $A$ approximately determines
the amplitude of the velocity vectors, $\omega/2\pi$ gives the oscillation
frequency, $\varepsilon$ determines the amplitude of the left-right motion of
the separatrix between the gyres, $\psi$ is the phase, $\mu$ determines the
dissipation, $s$ scales the dimensions of the workspace, and $\eta_i(t)$
describes a stochastic white noise with mean zero and standard deviation
$\sigma = \sqrt{2I}$, for noise intensity $I$.  In this work, $\eta_i(t)$ can be viewed as either measurement or environmental noise.  Fig. \ref{fig:ppDbleGyre} shows the phase portrait of the time-independent double-gyre model.

\begin{figure}
\centering
\includegraphics[width=0.55\linewidth]{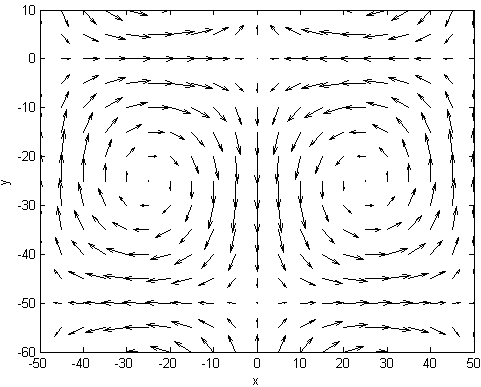}
\caption{Phase portrait of the model given by \eqref{eq:doubleGyre} with $A= 10$, $\mu=0.005$, $\varepsilon = 0$, $\psi = 0$, $I = 0$, and $s = 50$.\label{fig:ppDbleGyre}}
\end{figure}

Fig. \ref{fig:FTLEtrack} shows the use of the control strategy \eqref{eq:straddleControl} and \eqref{eq:trackControl} to track the Lagrangian
coherent structures of the periodic double-gyre model with noise.  As mentioned in Section \ref{sec:intro}, LCS are extensions of stable and unstable manifolds to non-autonomous dynamical systems \cite{ref:Kent2008}.  We note that while the control strategy was based on techniques developed for time-independent systems, the method performs surprisingly well in tracking LCS for slow time-varying systems in the presence of noise.  Details regarding LCS computation can be found in \cite{fbys11} and we refer the interested reader to the attached multimedia file for a movie of the full simulation run.  While the control strategy was developed for static flows, the movie shows the robustness of the strategy for tracking LCS in time-varying flows.

\begin{figure*}[ht]
\centering
\subfigure[t=0.8]{\includegraphics[width =0.24\textwidth]{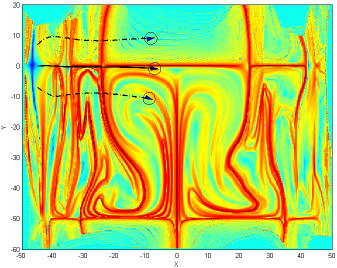}}
\subfigure[t=1.4]{\includegraphics[width =0.24\textwidth]{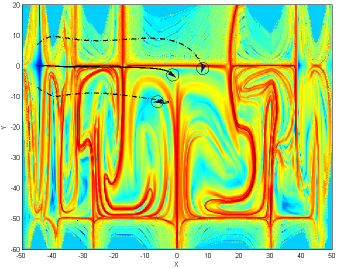}}
\subfigure[t=1.8]{\includegraphics[width =0.24\textwidth]{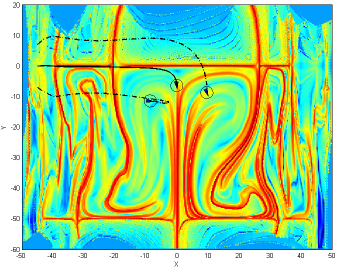}}
\subfigure[t=2.6]{\includegraphics[width =0.24\textwidth]{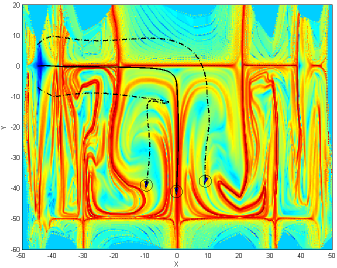}}\\
\subfigure[t=3.0]{\includegraphics[width =0.24\textwidth]{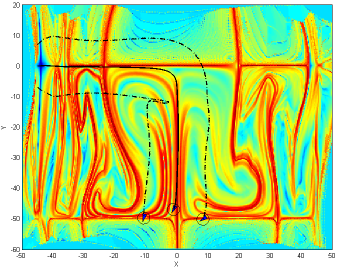}}
\subfigure[t=3.2]{\includegraphics[width =0.24\textwidth]{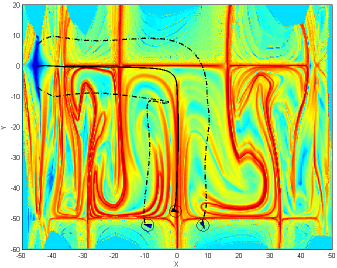}}
\subfigure[t=3.6]{\includegraphics[width =0.24\textwidth]{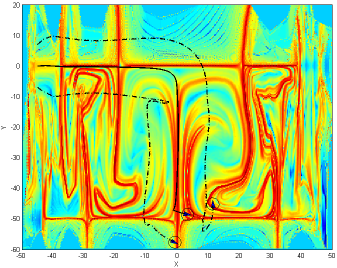}}
\subfigure[t=4.0]{\includegraphics[width =0.24\textwidth]{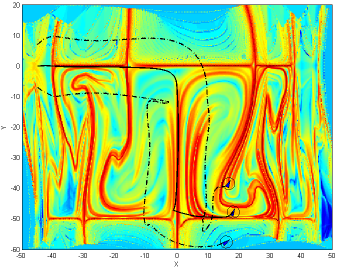}}
\caption{Trajectories of the team of $3$ robots tracking the Lagrangian coherent structures of the system described by \eqref{eq:doubleGyre} with $A= 10$, $\mu=0.005$, $\varepsilon = 0.1$, $\psi = 0$, $I = 0.01$, and $s = 50$.  The trajectories of the straddling robots are shown in black and the estimated LCS is shown in white.\label{fig:FTLEtrack}}
\end{figure*}

\section{DISCUSSION}\label{sec:discuss}
In this paper, we have designed a control strategy that allows collaborating robots to track coherent structures and manifolds on general static conservative flows.  In addition, we showed how the strategy can be used to track LCS in time-dependent conservative flows with measurement noise.  The saddle straddle control strategy is based on the communication of local velocity field measurements obtained by each robot.  Using the local velocity field information provided by the two straddling robots (the herders), one robot (the tracker), is able to detect the coherent structures, a global structure that delineates the phase space into different dynamical regions. Our work is novel in that the robots are determining the location of a global structure based solely on local information, and as far as we know, the sensing of LCS in the ocean has never been performed using autonomous vehicles.  Moreover, only initial state knowledge of the LCS is required locally to get an accurate prediction of the global structure.

While the cooperative control strategy was inspired by the PIM Triple Procedure, a procedure that relies on the computation of escape times which is a global property of the system, the controller itself only relies on information provided by each robot's onboard sensors.  We also note that the realization of the control strategy by the team of robots can be achieved without the need for global localization information. As such, the strategy is a purely local strategy.  Furthermore, the cooperative control strategy was derived to track the manifolds on a static flow, but performs surprisingly well at tracking the LCS in the time-dependent double-gyre model in the presence of noise.

Since realistic quasi-geostrophic ocean models exhibit double-gyre flow solutions, our first attempt seems to suggest that our methods may be and general enough to be applied to more complicated models, including multi-layer PDE ocean models.  As mentioned in Section \ref{sec:analysis} the robustness of the control strategy is dependent on numerous parameters in the system which includes robots' sensing and communication ranges, the bounds on the sensor and actuation noise, the vector interpolation technique, the sampling frequency, and the relative time scales of the AUV dynamics in relation to the surrounding flow dynamics.  While our initial results suggest that our approach may be robust enough to measurement noise, a more thorough understanding of the sensitivity of the proposed strategy to these various parameters is instrumental in extending our approach to more realistic ocean models and for field deployment.

\section{FUTURE WORK}\label{sec:future}
In recent years, there has been significant interest in the use of AUVs to collect scientific data in the ocean to improve our ability to forecast harmful algae blooms and weather and climate patterns.  One drawback to operating sensors in time-dependent and stochastic environments like the ocean is that the sensors will tend to escape from their monitoring region of interest. As such, the ability to identify and track Lagrangian coherent structures (LCS) in these dynamic environments is paramount in maintaining appropriate sensor coverage in regions of interest.  Additionally, since LCS have been shown to coincide with optimal trajectories in the ocean which minimize the energy and navigation time \cite{ref:Inanc2005, ref:Senatore2008}, real-time knowledge of these ``super-highways'' is key in planning efficient AUVs paths.

Of particular interest is the extension of our method to more realistic ocean models.  Specifically, can we extend our current cooperative tracking strategy to a swarm of heterogenous mobile and stationary sensors?  By increasing the team size and incorporating both stationary and mobile sensing devices, it is possible to refine our tracking technique to reveal the coherent structures at various spatial and time scales. One immediate direction for future work is to investigate how the proposed strategy scales to larger team sizes.  Second, underwater environments pose unique challenges in terms of wireless communications.  In general, acoustic transmissions generally have low data rates and acoustic wave propagation can be further affected by the surrounding fluid dynamics \cite{ref:Wang2009}.  As such, a second direction for future work is to investigate how communication delays and missed transmissions impact the overall accuracy of the tracking methodology.  In this work, we assume an initial state knowledge of the LCS is required. This initial formation may be difficult to achieve without any prior global knowledge of the flows.  By considering a team of both stationary and mobile sensors, one can potentially obtain an initial estimate of a local LCS through the stationary sensing network which can then be tracked and further refined by the mobile nodes.  A third direction for future work is to determine how one can strategically place a combination of mobile and stationary sensors to provide real-time updates on the locations of LCS.  





\end{document}